\newtheorem{theorem}{Theorem}
\begin{document}

\title{Analysis of Wireless-Powered Device-to-Device Communications with Ambient Backscattering 
}

 \author{  Xiao Lu$^{\dagger}$, Hai Jiang$^{\dagger}$, Dusit Niyato$^{\ddagger}$, Dong In Kim$^{\star}$, and Ping Wang$^{\ddagger}$ \\ 
    ~$^{\dagger}$ Dept. of Electrical $\&$ Computer Engineering, 
    University of Alberta, Canada\\
	~$^{\ddagger}$ School of Computer Science and Engineering, Nanyang Technological University, Singapore \\
	~$^{\star}$   School of Information and Communication Engineering, Sungkyunkwan University (SKKU), South Korea \\ 
}

\markboth{}{Shell \MakeLowercase{\textit{et al.}}: Bare Demo of
IEEEtran.cls for Journals}
  
\maketitle

\begin{abstract}
Self-sustainable communications based on advanced energy harvesting technologies have been under rapid development, which facilitate autonomous operation and energy-efficient transmission. 
Recently, ambient backscattering that leverages existing RF signal resources in the air has been invented to empower data communication among low-power devices. In this paper, we introduce hybrid device-to-device (D2D) communications by integrating ambient backscattering and wireless-powered communications. 
The hybrid D2D communications are self-sustainable, as no dedicated external power supply is required.
However, since the radio signals for energy harvesting and backscattering come from external RF sources, the performance of the hybrid D2D communications needs to be optimized efficiently. As such, we design two mode selection protocols for the hybrid D2D transmitter, allowing a more flexible adaptation to the environment. We then introduce analytical models to characterize the impacts of the considered environment factors, e.g., distribution, spatial density, and transmission load of the ambient transmitters, on the hybrid D2D communications performance. Extensive simulations show that the repulsion factor among the ambient transmitters has a non-trivial impact on the communication performance. Additionally, we reveal how different mode selection protocols affect the performance metrics.

\end{abstract}
 
\emph{Index terms- Internet-of-Things (IoT), Ambient backscatter,  wireless-powered communications, D2D communications, RF energy harvesting}. 


\section{Introduction}
With the advent of the Internet-of-Things (IoT)~\cite{V.2016Gazis,D.2016Niyato}, intelligent devices, such as smart household devices~\cite{D.2011Niyato},
renewable sensors~\cite{D.ICCNiyato2016},  
vehicular communicators, RFID tags, and wearable health-care gadgets, have become increasingly interconnected at an unprecedented scale.  
In this context, device-to-device (D2D) communications~\cite{Camps-Mur2013}, which empower devices in proximity to establish direct connections without an involvement of any cellular base stations, appear to be a cost-effective and energy-efficient solution. Recent research efforts~\cite{J.2015Liu,X.LuD2D} %
have shown that D2D communications have achieved evident performance gains in terms of network coverage and capacity, peak rates, throughput, communication latency and user experience. Therefore, it is envisioned to be an intrinsic part of the IoT.

Lately, ambient backscatter communications \cite{V.2013Liu} have  appeared to be a promising self-sustainable communication paradigm. In ambient backscattering, the information transmission is done by load modulation which does not involve active RF generation. In particular, an ambient backscattering device tunes the antenna load reflection coefficient by switching between two or more impedances, resulting in a varied amount of incident signal to be backscattered. In principle, when the impedance of the chosen load matches with that of the antenna, a small amount of the signal is reflected, exhibiting a signal absorbing state. Conversely, if the impedances are not matched, a large amount of the signal is reflected, indicating a signal reflecting state. A backscatter transmitter can use an absorbing state or reflecting state to transmit a `0' or `1' bit. Based on the detection of the amount of the reflected signal, the transmitted information is decoded at the receiver.
 
Unlike conventional backscatter communication (e.g., for passive sensors and RFID tags), ambient backscattering functions without the need of a dedicated carrier emitter (e.g., an RFID reader). Instead, an ambient backscatter device utilizes exogenous and incident RF waves as both energy resource to scavenge and signal resource to reflect. Moreover, ambient backscattering is featured with coupled backscattering and energy harvesting processes~\cite{X.2017Lu}. To initiate information transmission, the device first extracts energy from incident RF waves through rectifying. Once the rectified DC voltage is above an operating level of the circuit, the device is activated to conduct load modulation. Simultaneously, backscatter modulation is done on the reflected wave, enabling a full-time transmission. For example, a recent experiment in~\cite{N.2014Parks} demonstrated that a 1 Mbps transmission rate can be achieved at the distance of 7 feet, when the incident RF power available is above -20 dBm.

Despite many benefits, ambient backscatter communications have drawbacks that limit their applicability for D2D communications. Specifically, ambient backscattering achieves  relatively low data rate, typically ranging from several to tens of kbps~\cite{V.2013Liu,B.2014Kellogg}, which largely constrains the applications. A relatively high signal-to-noise ratio (SNR) is required to realize a low-error transmission with modulated backscatter. Moreover, the transmission distance is limited, typically ranging from several feet to tens of feet \cite{V.2013Liu,N.2014Parks} due to severe propagation attenuation and embedded modulation for an intended receiver. To address these shortcomings, in this paper, we introduce a novel hybrid D2D communication paradigm that integrates ambient backscattering and wireless-powered communications~\cite{XLuSurvey,D.2015NiyatoICC,D.2015NiyatoWCNC} as a self-sustainable communication method. For communication, the proposed hybrid D2D transmitter harvests energy from ambient RF signals and can select to perform ambient backscattering or wireless-powered communications with the aim of extending the applicability as well as functionality. Through the analysis, we show that these two technologies can well complement each other and result in better performance for D2D communications.
 
{\bf Notations:} In the following, we use $\mathbb{E}[\cdot]$ to denote the average over all random variables in $[\cdot]$, 
$\mathbb{E}_{X}[\cdot]$ to denote the expectation over the random variable $X$, and $\mathbb{P}[E]$ to denote the probability that an event $E$ occurs. Besides, $\|{\mathbf x}\|$ is used to represent the Euclidean norm between the coordinate ${\mathbf x}$ and the origin of the Euclidean space. $\bar{z}$ and $|z|$ denote the complex conjugate and modulus of the complex number $z$, respectively.  The notations $f_{X}(\cdot)$, $F_{X}(\cdot)$, $M_{X}(\cdot)$ and $\mathcal{L}_{X}(\cdot)$ are used to denote, respectively, the probability density function (PDF), cumulative distribution function (CDF), moment generating function (MGF), and Laplace transform of a random variable $X$.
 
\section{Ambient Backscattering Assisted Wireless-Powered Communications}

We now propose a novel hybrid transmitter that combines two self-sustainable communication approaches, i.e., ambient backscatter communications and wireless-powered communications. On one hand, ambient backscatter communications can be operated with very low power consumption. Thus, ambient backscattering may still be performed when the power density of ambient RF signals is low. On the other hand, the wireless-powered communications, also referred to as harvest-then-transmit (HTT)~\cite{H.2014Ju, X.Sec2016Lu}, though have higher power consumption, can first accumulate harvested energy and achieve possibly longer transmission distance through active RF transmission. Therefore, these two approaches can well complement each other and result in better transmission performance. 

We depict the block diagrams of the hybrid transmitter and hybrid receiver in Fig.~\ref{fig:D2D_system}.  With the proposed architecture, the hybrid transmitter is flexible
to perform active data transmission, backscattering, and RF energy harvesting. At the receiver, a dual-mode circuit  can demodulate data from both the modulated backscatter and active RF transmission. The mode selection can be done by the
hybrid transmitter through signaling.

We consider the hybrid D2D communications coexisting with ambient RF transmitters, e.g., cellular base stations and mobiles. Fig.~\ref{fig:D2D_system} illustrates our proposed system model.  
We consider two groups of coexisting ambient transmitters, denoted as $\Phi$ and $\Psi$, respectively, which work on different frequency bands. The RF energy harvester of the hybrid transmitter denoted as $\mathrm{S}$ scavenges on the transmission frequency of $\Phi$ (e.g., ambient base stations). If the hybrid transmitter is in ambient backscattering mode, it performs load modulation on the incident signals from $\Phi$. Alternatively, when the hybrid transmitter is in HTT mode, it harvests energy from ambient transmitters in $\Phi$, and transmits over a different frequency band used by ambient transmitters in $\Psi$ (e.g., ambient mobile users). The received signal at the hybrid receiver denoted as $\mathrm{D}$ from the hybrid transmitter is impaired by the interference from $\Psi$. We assume $\Phi$ and $\Psi$ follow independent $\alpha$-Ginibre point process (GPP)~\cite{DecreusefondFlintVergne}, where $\alpha \in \big(0, 1\big]$ represents the repulsion factor which measures the correlation among the spatial points in $\Phi$ and $\Psi$, respectively. 

\begin{figure}
	\centering
	\includegraphics[width=0.4\textwidth]{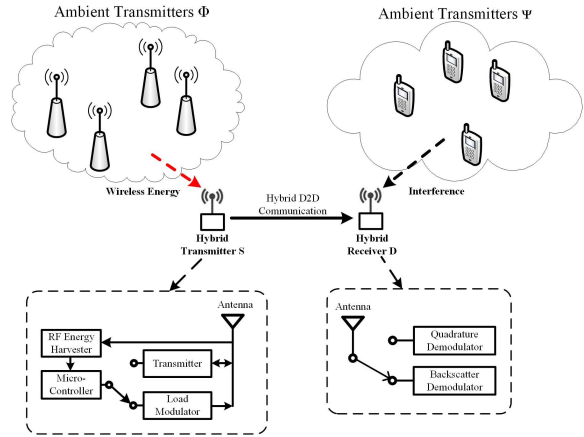}  
	\caption{Illustration of the hybrid D2D communication.} \label{fig:D2D_system}
\end{figure}  
 
Without loss of generality,  
the point processes $\Phi$ and $\Psi$ are assumed to be supported on the circular observation windows $\mathbb{O}_{\mathrm{S}}$ and  $\mathbb{O}_{\mathrm{D}}$ with radius $R$, which are centered at  $\mathrm{S}$ and $\mathrm{D}$, respectively. 
The transmit power of the ambient transmitters belonging to $\Phi$ and $\Psi$ are denoted as $P_{A}$ and $P_{B}$, respectively. Let $\zeta_{A}$ and $\zeta_{B}$ denote the spatial density of $\Phi$ and $\Psi$, respectively.  
Let $\mathcal{A}$ (and $\mathcal{B}$) denote the set of active ambient transmitters of $\Phi$ (and $\Psi$) observed in $\mathbb{O}_{\mathrm{S}}$ (and $\mathbb{O}_{\mathrm{D}}$).  
We assume that the probability that an ambient transmitter in  $\Phi$ (and $\Psi$) is active is an independent and identically distributed (i.i.d.) random variable, denoted as $l_{A}$ (and  $l_{B}$). $l_{A}$ and $l_{B}$ can also be interpreted as the {\it transmission load}, which measures the fraction of time that an ambient transmitter in $\Phi$ and $\Psi$, respectively, is active. It is worth noting that the sets of active transmitters in $\Phi$ and $\Psi$ in the reference time are independent thinning point processes of $\Phi$ and $\Psi$ with spatial density $l_{A} \zeta_{A}$ and $l_{B} \zeta_{B}$, respectively. 
Let $\xi$ represent the ratio between $\zeta_{B}$ and $\zeta_{A}$, i.e., $\xi=\zeta_{B}/\zeta_{A}$, referred to as the interference ratio. A larger value of $\xi$ indicates a higher level of interference.

Let $\mathbf{x}_{\mathrm{S}}$ represent the location of the hybrid transmitter.
The power of the incident RF signals at the antenna of $\mathrm{S}$ can be calculated as
$P_{I}= P_{A} \sum_{a \in \mathcal{A}} h_{a,\mathrm{S}} \|\mathbf{x}_{a}-\mathbf{x}_{\mathrm{S}}\|^{-\mu}$, 
where 
$h_{a,\mathrm{S}}$ represents the channel gain between the ambient transmitter $a \in \mathcal{A}$ and $\mathrm{S}$, and $\mu$ denotes the path loss exponent.
The circuit of the hybrid transmitter becomes functional if it can extract sufficient energy from the incident RF signals. When the hybrid transmitter works in different modes (i.e., either HTT or ambient backscattering),  the hardware circuit consumes different amounts of energy.\footnote{The typical power consumption rate of an RF-powered transmitter ranges from hundreds of micro-Watts to a few milli-Watts \cite{X.CRNLu2014,X.2015Lu,Y.2015Ishikawa}, %
while that of a backscatter transmitter ranges from a few micro-Watts to hundreds of micro-Watts~\cite{N.2014Parks}.} Let $\rho_{\mathrm{B}}$ and $\rho_{\mathrm{H}}$ denote the circuit power consumption rates (in Watt) in ambient backscattering and HTT modes, respectively. If the hybrid transmitter cannot harvest sufficient energy, an outage occurs.

In the ambient backscattering mode, if the instantaneous energy harvesting rate (in Watt) exceeds $\rho_{\mathrm{B}}$, the hybrid transmitter can perform backscatter modulation. 
During backscattering process, a fraction of the incident signal power, denoted as $P_{H}$, is rectified for conversion from RF signal to direct current (DC), and the residual amount of signal power, denoted as $P_{R}$, is reflected to carry the modulated information. 
In the ambient backscattering mode, the energy harvesting rate (in Watt) can be represented as~\cite{C.2012Boyer,C.2014Boyer}
$P^{\mathrm{B}}_{E}=\beta P_{H}=\beta \eta P_{I}$,
where $0 < \beta \leq 1$ denotes the efficiency of RF-to-DC energy conversion, and $\eta$ represents the fraction of the incident RF power for RF-to-DC energy conversion. Note that the value of $\eta$ depends on the symbol constellation adopted for multi-level load modulation~\cite{C.2012Boyer}. For example, $\eta$ is 0.625 on average assuming equiprobable symbols if binary constellations are adopted with modulator impedance values set at 0.5 and 0.75~\cite{C.2014Boyer}.  
 
Let $\mathbf{x}_{\mathrm{D}}$ represent the location of the hybrid receiver.
$d\!\!=\!\!\|\mathbf{x}_{\mathrm{S}}\!-\! \mathbf{x}_{\mathrm{D}}\|$ denotes the distance between $\mathrm{S}$ and $\mathrm{D}$. 
Then, in ambient backscattering mode, the power of the received backscatter at $\mathrm{D}$ from $\mathrm{S}$
can be calculated as $P_{\mathrm{S},\mathrm{D}} \!=\!
    \delta P_{I}  (1-\eta) h_{\mathrm{S},\mathrm{D}} d^{-\mu}$ if $P^{\mathrm{B}}_{E} > \rho_{\mathrm{B}}$ and $P_{\mathrm{S},\mathrm{D}} = 0$ otherwise,   
where $0<\delta \leq 1$ is the backscattering efficiency of the transmit antenna, which is related to the antenna aperture~\cite{V.2006Nikitin}, and $h_{\mathrm{S},\mathrm{D}}$ denotes the channel gain between $\mathrm{S}$ and $\mathrm{D}$ on the transmission frequency of $\Phi$.   
If $\mathrm{S}$ is active in the ambient backscattering mode, the resulted SNR 
at $\mathrm{D}$ is
\begin{equation} \label{eqn:SNR_B}
	\nu_{\mathrm{B}} =\frac{P_{\mathrm{S},\mathrm{D}}  }{ \sigma^2}=\frac{\delta P_{I} (1-\eta) h_{\mathrm{S},\mathrm{D}}}{ d^{\mu}\sigma^2},
\end{equation}
where $\sigma^2$ is the power spectrum density of additive white Gaussian noise (AWGN).
If the received SNR $\nu_{\mathrm{B}}$ is above a threshold $\tau_{\mathrm{B}}$, $\mathrm{D}$ is able to successfully decode information from the modulated backscatter at a pre-designed rate $T_{\mathrm{B}}$ (in bits per second (bps)). This backscatter transmission rate is dependent on the setting of resistor-capacitor circuit elements~\cite{V.2013Liu}. 

When the hybrid transmitter $\mathrm{S}$ chooses to adopt active RF transmission, it is operated by the HTT protocol~\cite{H.2014Ju}. In the HTT mode, the hybrid transmitter works in a time-slot based manner. Specifically, in each time slot, the first period, with time fraction $\omega$, is for harvesting energy, during which the impedance of the matching network is tuned to fully match to that of the antenna to maximize the energy conversion efficiency. The corresponding energy harvesting rate is $P^{\mathrm{H}}_{E}=\omega \beta P_{I}$.
This harvested energy is first utilized to power the circuit. Then the remaining energy, if available, is stored in an energy storage. 
If the harvested energy is enough to operate the circuit, the hybrid transmitter spends the rest of the period $(1-\omega)$ to perform active transmission with the stored energy.  
     
In the active transmission phase, the transmit power of $\mathrm{S}$ is $P_{\mathrm{S}} \! =    \! \frac{ P^{\mathrm{H}}_{E}- \rho_{\mathrm{H}}}{1-\omega }$ if $P^{\mathrm{H}}_{E}   \! >   \! \rho_{\mathrm{H}}$ and $P_{\mathrm{S}}  \!  =  \! 0$ otherwise.
Then, the received signal-to-interference-and-noise ratio (SINR) at $\mathrm{D}$ can be expressed as
\begin{eqnarray}
\nu_{\mathrm{H}} =\frac{ P_{\mathrm{S}} \widetilde{h}_{\mathrm{S},\mathrm{D}} d^{-\mu} }{\sum_{b \in \mathcal{B}}  P_{B} h_{b,\mathrm{D}}\|\mathbf{x}_{b}-\mathbf{x}_{\mathrm{D}}\|^{-\mu} +\sigma^{2} },
\end{eqnarray}
where $\widetilde{h}_{\mathrm{S},\mathrm{D}}$ denotes the channel gain between $\mathrm{S}$ and $\mathrm{D}$ on the transmit frequency of $\Psi$, and $h_{b,\mathrm{D}}$ is the channel gain from ambient transmitter $b \in \mathcal{B}$ to $\mathrm{D}$. 

As the hybrid D2D communications and the transmission from ambient transmitters may be performed in different environments,
we consider different fading channels for $h_{\mathrm{S},\mathrm{D}}$, $\widetilde{h}_{\mathrm{S},\mathrm{D}}$, $h_{a,\mathrm{S}}$ and $h_{b,\mathrm{D}}$. Specifically,  $h_{\mathrm{S},\mathrm{D}}$ and $\widetilde{h}_{\mathrm{S},\mathrm{D}}$ are assumed to follow Rayleigh distribution. Both $h_{a,\mathrm{S}}$ and $h_{b,\mathrm{D}}$ follow i.i.d. Nakagami-$m$ distribution, which is a general channel fading model that contains Rayleigh distribution as a special case when $m=1$.  
This channel model allows a flexible evaluation of the impact of the ambient signals.    
Let $\mathcal{G}(x,y)$ represent the gamma distribution with shape parameter $x$ and scale parameter $y$, and $\mathcal{E}(x)$ represent the exponential distribution with rate parameter $x$. Thus, the channel gain coefficients are expressed as $h_{a,\mathrm{S}}, h_{b,\mathrm{D}} \sim \mathcal{G}( m, 1/m)$ and $h_{\mathrm{S},\mathrm{D}}, \widetilde{h}_{\mathrm{S},\mathrm{D}} \sim \mathcal{E}(1)$.  
 
For the operation of our proposed hybrid transmitter, we consider two mode selection protocols, namely, \emph{power threshold-based protocol} (PTP) and \emph{SNR threshold-based protocol} (STP). 
\begin{itemize}

\item With PTP, a hybrid transmitter first detects the available energy harvesting rate $P^{\mathrm{H}}_{\mathrm{E}}$. If $P^{\mathrm{H}}_{\mathrm{E}}$ is below the threshold which is needed to power the RF transmitter circuit (for active transmission), i.e., $ P^{\mathrm{H}}_{E} \leq \rho_{\mathrm{H}}  $, the ambient backscattering mode will be used. Otherwise, the HTT mode will be adopted. 

\item With STP, the hybrid transmitter first attempts to transmit by backscattering. If the achieved SNR at the receiver is above the threshold which is needed to decode information from the backscatter, i.e., $\nu_{\mathrm{B}} > \tau_{\mathrm{B}}$, the transmitter will be in the ambient backscattering mode. Otherwise, it will switch to the HTT mode.

\end{itemize} 
 
 
\section{Performance Analysis}

In this section, we focus on analyzing the {\em coverage probability} of the hybrid D2D transmitter which is defined as the probability that the hybrid transmitter can successfully transmit data to its receiver. 
The transmission of the hybrid transmitter is successful if the achieved SNR or SINR at the associated receiver exceeds its corresponding target threshold $\tau_{\mathrm{B}}$ or $\tau_{\mathrm{H}}$ in backscattering mode or in HTT mode, respectively.  
Let $\mathcal{C}_{\mathrm{M}}$ denote the coverage probability of the hybrid transmitter being in mode $\mathrm{M} \in \{\mathrm{B},\mathrm{H}\}$. Then, the overall coverage probability is defined as 
\begin{eqnarray} \label{def:coverage_probability} 
\mathcal{C} \triangleq  \mathbb{E} [ \mathbb{E}_{\mathrm{M}}[\mathcal{C}_{\mathrm{M}}|\mathrm{M}]] = \mathbb{E} \big[ \mathbb{E}_{\mathrm{M}}[\mathbb{P}[\nu_{\mathrm{M}}> \tau_{\mathrm{M}}, P^{\mathrm{M}}_{E} > \rho_{\mathrm{M}}|\mathrm{M}]]\big]	.	
\end{eqnarray}  

In the following, the performance analysis of the hybrid D2D communication is based on a general class of stochastic geometry tool, namely $\alpha$-GPP~\cite{DecreusefondFlintVergne}.
$\alpha$-GPP is a repulsive point process which allows to characterize the repulsion among the distribution of the randomly located points and has the Poisson point process (PPP) as a special case (i.e., when $\alpha \to 0$). Recently, $\alpha$-GPP has attracted considerable attention in wireless network modeling \cite{X.2016Lu,I.2005Flint,X.Lu2015,I.2014Flint} 
because it renders tractable analytical expressions in terms of Fredholm determinants. The Fredholm determinant is a generalized determinant of a matrix defined by bounded operators on a Hilbert space and has shown to be an efficient way for numerical evaluation of the relevant quantities~\cite{L.2015Decreusefond}.
We refer to~\cite{ DecreusefondFlintVergne,L.2015Decreusefond} for a formal definition of an $\alpha$-GPP and the mathematical details.

Based on the $\alpha$-GPP framework, we have the coverage probability of PTP described as follows.

\begin{theorem}
\label{thm:CoverageOutage_PTP} 
The coverage probability of the hybrid D2D communications under PTP is
\begin{align}
\label{eq:coverage_probability_Rayleigh_PTP}
  \mathcal{C}_{\mathrm{PTP}} \!& =  \!  F_{P_I}\Big(\frac{\rho_{\mathrm{H}}}{\omega \beta}\Big)
      \int^{\infty}_{\rho_{\mathrm{B}}/\beta \eta} \exp \left( - \frac{  \tau_{\mathrm{B}}d^{\mu} \sigma^2 }{\delta \rho \left (1- \eta \right ) } \right) f_{P_{I}}(\rho) \mathrm{d}\rho \nonumber \\ &+
 \left(\!1\!- \! F_{P_I}\Big(\frac{\rho_{\mathrm{H}}}{\omega \beta}\Big) 
 \right) \int^{\infty}_{\rho_{\mathrm{H}}/\beta \omega}  \exp \left(\! - \frac{  \tau_{\mathrm{H}}d^{\mu} \sigma^{2} (1-\omega)}{\omega \beta \rho - \rho_{\mathrm{H}}} \!\right) \nonumber \\
 & \times \mathrm{Det}\left( \mathrm{Id} + \alpha \mathbb{B}_{\Psi}(\rho) \right)^{-1/\alpha} f_{P_{I}}(\rho) \mathrm{d}\rho   
\end{align} 
where $\mathrm{Det}(\cdot)$ represents the Fredholm determinant~\cite{L.2015Decreusefond}, $F_{P_{I}}(\rho)$ and $f_{P_{I}}(\rho)$ are the CDF and PDF of $P_{I}$ given, respectively, as 
\begin{equation} \label{CDF_PI}
F_{P_{I}}(\rho)=\mathcal{L}^{-1}\left\{ \frac{\mathrm{Det}(\mathrm{Id}+\alpha \mathbb{A}_{\Phi}(s ))^{-1/\alpha }}{s} \right\}(\rho),
\end{equation}
and 
\begin{equation}
f_{P_{I}}(\rho) =\mathcal{L}^{-1}\{ \mathrm{Det}(\mathrm{Id}+\alpha \mathbb{A}_{\Phi}(s))^{-1/\alpha }\}(\rho), \label{eq:PDF}
\end{equation} 
wherein $\mathcal{L}^{-1}$ means inverse Laplace transform and $\mathbb{A}_{\Phi}(s)$ is 
given by
\begin{align}
\label{eqn:kernal_A}
\mathbb{A}_{\Phi}(s)&=\sqrt{1-\left( 1+ \frac{s  P_{A}}{m\|\mathbf{x}-\mathbf{x}_{\mathrm{S}}\|^{\mu}} \right)^{-m}} G_{\Phi}(\mathbf{x},\mathbf{y})  \nonumber \\  & \hspace{20mm} \times\sqrt{1-\left( 1+ \frac{s  P_{A}}{m\|\mathbf{y}-\mathbf{x}_{\mathrm{S}}\|^{\mu}} \right)^{-m}}, 
\end{align}  
and $\mathbb{B}_{\Psi}(\rho)$ is
\begin{align} \label{eqn:kernel_B}
& \mathbb{B}_{\Psi} (\rho) \! = \! \!  \sqrt{\!  1 \!  - \!  \left(\!\!  1\!  +\!  \frac{   \tau_{\mathrm{H}}d^{\mu} (1\!-\!\omega)  P_{B} }{m(\omega \beta \rho\! - \! \rho_{\mathrm{H}})\|\mathbf{x}\!-\!\mathbf{x}_{\mathrm{D} }\|^{\mu} } \!  \right)^{\!\!\! -m} }\!\! G_{\Psi} (\mathbf{x},\mathbf{y}) \nonumber \\
& \hspace{20mm} \times \sqrt{\! 1\!-\! \left(\! \! 1\! +\! \frac{    \tau_{\mathrm{H}}d^{\mu} (1\!-\!\omega)   P_{B} }{m(\omega \beta \rho\! - \!\rho_{\mathrm{H}})\|\mathbf{y}\!-\!\mathbf{x}_{\mathrm{D}}\|^{\mu} }\!\! \right)^{\!\!\!-m} },  \hspace{-2mm}
\end{align}
wherein $G_{\Psi}$ is the Ginibre kernel of $\Psi$ defined as
\begin{equation}
\label{eq:ginibre_Psi} G_{\Psi}(\mathbf x,\mathbf y) \! = \! l_{B} \zeta_{B} \,e^{\pi l_{B} \zeta_{B} \mathbf x \bar{\mathbf y}} e^{-\frac{\pi l_{B} \zeta_{B}}{2}( |\mathbf x|^2 + |\mathbf y|^2)}, 
\mathbf x,\mathbf y \in \mathcal{B}.	
\end{equation}

\end{theorem}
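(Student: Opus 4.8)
The plan is to condition on the mode selected by PTP, to evaluate the coverage in each mode by exploiting the $\mathcal{E}(1)$ (Rayleigh-fading) law of the intended link's channel gain, and to push all the spatial structure of $\Phi$ and $\Psi$ into Laplace functionals of $\alpha$-GPPs, which~\cite{DecreusefondFlintVergne,L.2015Decreusefond} express as Fredholm determinants. Under PTP the mode is a deterministic threshold test on the sensed incident power: backscattering is used iff $P^{\mathrm{H}}_{E}=\omega\beta P_{I}\le\rho_{\mathrm{H}}$, i.e.\ with probability $F_{P_I}(\rho_{\mathrm{H}}/(\omega\beta))$, and HTT otherwise. By~\eqref{def:coverage_probability}, $\mathcal{C}_{\mathrm{PTP}}=F_{P_I}(\rho_{\mathrm{H}}/(\omega\beta))\,\mathcal{C}_{\mathrm{B}}+\big(1-F_{P_I}(\rho_{\mathrm{H}}/(\omega\beta))\big)\,\mathcal{C}_{\mathrm{H}}$, where $\mathcal{C}_{\mathrm{M}}=\mathbb{P}[\nu_{\mathrm{M}}>\tau_{\mathrm{M}},\,P^{\mathrm{M}}_{E}>\rho_{\mathrm{M}}]$ is the coverage in mode $\mathrm{M}$, the transmission-time randomness being drawn independently of the mode decision. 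For $\mathcal{C}_{\mathrm{B}}$ I would condition on $P_{I}=\rho$: the energy condition $P^{\mathrm{B}}_{E}=\beta\eta P_{I}>\rho_{\mathrm{B}}$ restricts $\rho$ to $(\rho_{\mathrm{B}}/(\beta\eta),\infty)$, and since $\nu_{\mathrm{B}}=\delta\rho(1-\eta)h_{\mathrm{S},\mathrm{D}}/(d^{\mu}\sigma^{2})$ with $h_{\mathrm{S},\mathrm{D}}\sim\mathcal{E}(1)$ independent of $P_{I}$, one has $\mathbb{P}[\nu_{\mathrm{B}}>\tau_{\mathrm{B}}\mid P_{I}=\rho]=\exp(-\tau_{\mathrm{B}}d^{\mu}\sigma^{2}/(\delta\rho(1-\eta)))$; averaging this against $f_{P_I}$ over $(\rho_{\mathrm{B}}/(\beta\eta),\infty)$ reproduces the first line of~\eqref{eq:coverage_probability_Rayleigh_PTP}.

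For $\mathcal{C}_{\mathrm{H}}$ I would again condition on $P_{I}=\rho$; now $P^{\mathrm{H}}_{E}=\omega\beta P_{I}>\rho_{\mathrm{H}}$ restricts $\rho$ to $(\rho_{\mathrm{H}}/(\beta\omega),\infty)$ and fixes $P_{\mathrm{S}}=(\omega\beta\rho-\rho_{\mathrm{H}})/(1-\omega)$. Writing $I_{\Psi}=\sum_{b\in\mathcal{B}}P_{B}h_{b,\mathrm{D}}\|\mathbf{x}_{b}-\mathbf{x}_{\mathrm{D}}\|^{-\mu}$ and using that $\widetilde{h}_{\mathrm{S},\mathrm{D}}\sim\mathcal{E}(1)$ is independent of $P_{I}$ and of $I_{\Psi}$ (the latter because $\Phi$, $\Psi$ and the fading variables are mutually independent), the SINR event collapses to $\mathbb{P}[\nu_{\mathrm{H}}>\tau_{\mathrm{H}}\mid P_{I}=\rho]=\exp\!\big(-\tau_{\mathrm{H}}d^{\mu}\sigma^{2}(1-\omega)/(\omega\beta\rho-\rho_{\mathrm{H}})\big)\,\mathcal{L}_{I_{\Psi}}(\tau_{\mathrm{H}}d^{\mu}/P_{\mathrm{S}})$. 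Since $h_{b,\mathrm{D}}\sim\mathcal{G}(m,1/m)$, integrating out the fading of each $\Psi$-point turns $\mathcal{L}_{I_{\Psi}}(s)$ into $\mathbb{E}[\prod_{\mathbf{x}\in\mathcal{B}}v_{s}(\mathbf{x})]$ with $v_{s}(\mathbf{x})=(1+sP_{B}/(m\|\mathbf{x}-\mathbf{x}_{\mathrm{D}}\|^{\mu}))^{-m}\in(0,1]$, and with $s=\tau_{\mathrm{H}}d^{\mu}/P_{\mathrm{S}}$ the symmetrized kernel $\sqrt{1-v_{s}(\mathbf{x})}\,G_{\Psi}(\mathbf{x},\mathbf{y})\sqrt{1-v_{s}(\mathbf{y})}$ is exactly $\mathbb{B}_{\Psi}(\rho)$ of~\eqref{eqn:kernel_B}. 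The Laplace functional of the $\alpha$-GPP $\mathcal{B}$ (the active subset of $\Psi$, an $l_{B}$-thinned $\alpha$-GPP with Ginibre kernel~\eqref{eq:ginibre_Psi}), namely $\mathbb{E}[\prod_{\mathbf{x}}v(\mathbf{x})]=\mathrm{Det}(\mathrm{Id}+\alpha S)^{-1/\alpha}$ with $S$ the integral operator of symmetrized kernel $\sqrt{1-v(\mathbf{x})}\,G_{\Psi}(\mathbf{x},\mathbf{y})\sqrt{1-v(\mathbf{y})}$, then gives $\mathcal{L}_{I_{\Psi}}(\tau_{\mathrm{H}}d^{\mu}/P_{\mathrm{S}})=\mathrm{Det}(\mathrm{Id}+\alpha\mathbb{B}_{\Psi}(\rho))^{-1/\alpha}$, and integrating against $f_{P_I}$ over $(\rho_{\mathrm{H}}/(\beta\omega),\infty)$ produces the second line of~\eqref{eq:coverage_probability_Rayleigh_PTP}.

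It remains to identify the law of $P_{I}=P_{A}\sum_{a\in\mathcal{A}}h_{a,\mathrm{S}}\|\mathbf{x}_{a}-\mathbf{x}_{\mathrm{S}}\|^{-\mu}$, which has the same shot-noise structure as $I_{\Psi}$. The same two ingredients --- the gamma moment generating function taken per point and the $\alpha$-GPP Laplace functional for the active subset $\mathcal{A}$ of $\Phi$, with Ginibre kernel $G_{\Phi}$ --- give $\mathcal{L}_{P_{I}}(s)=\mathrm{Det}(\mathrm{Id}+\alpha\mathbb{A}_{\Phi}(s))^{-1/\alpha}$ with $\mathbb{A}_{\Phi}$ as in~\eqref{eqn:kernal_A}; inverting, together with the elementary relation $\mathcal{L}\{F_{P_I}\}(s)=\mathcal{L}_{P_{I}}(s)/s$ between the Laplace transforms of the CDF and the PDF of $P_{I}$, yields~\eqref{eq:PDF} and~\eqref{CDF_PI}. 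Substituting these into the mode decomposition above completes~\eqref{eq:coverage_probability_Rayleigh_PTP}.

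The step I expect to be the main obstacle is the correct invocation of the $\alpha$-GPP Laplace functional: one must first verify that the i.i.d.\ gamma fading marks can be integrated out pointwise, so that what remains is genuinely of the form $\mathbb{E}[\prod_{\mathbf{x}}v(\mathbf{x})]$ with $v$ valued in $[0,1]$, and then apply the Fredholm-determinant identity with precisely the operator and the $(+\alpha,\,-1/\alpha)$ normalization underlying~\eqref{eqn:kernal_A}--\eqref{eq:ginibre_Psi}. A secondary difficulty is that $f_{P_I}$ and $F_{P_I}$ are available only through an inverse Laplace transform, so~\eqref{eq:coverage_probability_Rayleigh_PTP}, like the Fredholm determinants themselves, is ultimately evaluated numerically.
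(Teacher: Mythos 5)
Your proposal follows essentially the same route as the paper's own proof: the mode decomposition $\mathcal{C}_{\mathrm{PTP}}=\mathcal{B}_{\mathrm{PTP}}\mathcal{C}_{\mathrm{B}}+(1-\mathcal{B}_{\mathrm{PTP}})\mathcal{C}_{\mathrm{H}}$ with $\mathcal{B}_{\mathrm{PTP}}=F_{P_I}(\rho_{\mathrm{H}}/(\omega\beta))$, conditioning on $P_I=\rho$ and exploiting the $\mathcal{E}(1)$ law of the direct link, integrating out the per-point gamma fading, and invoking the $\alpha$-GPP Laplace-functional/Fredholm-determinant identity (the paper's cited Theorem 2.3 of Decreusefond et al.) for both $\mathcal{L}_{P_I}$ and the interference term. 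The steps you flag as delicate are exactly the ones the paper handles the same way, so the argument is correct and matches the published proof.
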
 
  
\begin{proof} 
We first determine the distribution of the aggregated received power from ambient transmitters at the origin by  calculating its Laplace transform. Specifically, we have 
\begin{align} 
\mathcal{L}_{P_{I}}(s)  
&= \mathbb{E} \left[ \exp \left ( \sum_{a \in \mathcal{A}} \ln \left (\left( 1+ \frac{s  P_{A} }{m\|\mathbf{x}-\mathbf{x}_{\mathrm{S}}\|^{\mu}} \right)^{-m} \right) \right) \right] 
	\nonumber \\ 	
&\overset{\text{(i)}}{=} \mathrm{Det}(\mathrm{Id}+\alpha \mathbb{A}_{\Phi}(s))^{-1/\alpha }, \label{LP_incident_rate}
\end{align}	
	where $M_h(\cdot)$ is the MGF of $h_{a,\mathrm{S}}$ and $(\mathrm{i})$ follows by applying {\em \cite[Theorem 2.3]{L.2015Decreusefond}}, and $\mathbb{A}_{\Phi}$  is given in (\ref{eqn:kernal_A}).  
	
	Given the Laplace transforms of $P_{I}$, by definition, the PDF of $P_{I}$ is attained by taking the inverse Laplace transform as follows
	\begin{align}
	f_{\!P_{I}}\!(\rho)&   =\mathcal{L}^{ -1} \{ \mathrm{Det}(\mathrm{Id}+\alpha \mathbb{A}_{\Phi}(s))^{\!-1/\alpha }\!\}(\rho), \label{eq:PDF_P_I}
	\end{align}
	
	Furthermore, integrating PDF in (\ref{eq:PDF_P_I}) yields
	\begin{align} \label{eq:CDF_PI}
	F_{P_{I}}\!(\rho)  
	= \mathcal{L}^{-1} \!\left \{ \! \frac{ \mathrm{Det}(\mathrm{Id}+\alpha \mathbb{A}_{\Phi}(s))^{-\frac{1}{\alpha} }}{s} \!\!\right \}(\rho)	.
	\end{align}

	One notices that the probability that $\mathrm{S}$ is in ambient backscattering mode under PTP, denoted as $\mathcal{B}_{\mathrm{PTP}}$, is equal to the CDF of  $P_{I}$ evaluated at $\frac{\rho_{\mathrm{H}}}{\omega \beta}$, which is expressed as 
		\begin{align}\label{eqn:B_PTP}
	\hspace{-2mm}	\mathcal{B}_{\mathrm{PTP}}\! = \!  F_{P_{I}} \!\! \left( \frac{\rho_{\mathrm{H}}}{\omega \beta }\right) \!=\! \mathcal{L}^{ -1}\!\! \left\{\! \frac{ \mathrm{Det}(\mathrm{Id}\!+\!\alpha \mathbb{A}_{\Phi}(s  )^{-1/\alpha } }{s } \! \right\}\!\!\left(\!\frac{\rho_{\mathrm{H}}}{\omega \beta}\!\right)\!\!	.\hspace{-2mm}
		\end{align}

We then continue to calculate the coverage probability in the ambient backscattering mode. By the definition of $\mathcal{C}_{\mathrm{B}} $, we have
\begin{align}
\mathcal{C}_{\mathrm{B}}  
& = \mathbb E_{P_{I}}\!\! \left[ \mathbb{P} \! \left[h_{\mathrm{S},\mathrm{D}}\! >\!\frac{ \tau_{\mathrm{B}}d^{\mu} \sigma^2 }{\delta P_{I} \left ( 1- \eta \right ) } \Bigg|  P_{I} \right]\!\! \mathbbm{1}_{P_{I} >\rho_{\mathrm{B}}/\beta \eta} \right] \nonumber\\ 
& 
= \int^{\infty}_{\rho_{\mathrm{B}}/\beta\eta} \exp \left( - \frac{ \tau_{\mathrm{B}} d^{\mu} \sigma^2 }{\delta \rho \left ( 1- \eta \right ) } \right) f_{P_{I}}(\rho) \mathrm{d}\rho 	.	\label{eqn:delta_B_PTP}
\end{align} 
where $\mathbbm{1}_{E}$ denotes the indicator function of an event $E$ which is equal to 1 when $E$
holds and 0 otherwise.

Let $Q=\xi \sum_{b \in \mathcal{B}} P_{B} h_{b,\mathrm{D}} \|\mathbf{x}_{b}-\mathbf{x}_{\mathrm{D}}\|^{-\mu}$ denote the aggregated interference at the receiver. From (\ref{def:coverage_probability}), we then derive the coverage probability in the HTT mode as in (\ref{eqn:coverageprobability_RSP_HTT}) after some mathematical manipulations.
	\begin{align}
	\mathcal{C}_{\mathrm{H}}  &= \int^{\infty}_{\rho_{\mathrm{H}}/\beta\omega } \exp \left(\! - \frac{  \tau_{\mathrm{H}}d^{\mu} \sigma^{2} (1-\omega) }{\omega \beta \rho - \rho_{\mathrm{H}} }\! \right) \nonumber \\ 
	& \hspace{20mm} \times \mathrm{Det}\left( \mathrm{Id} + \alpha \mathbb{B}_{\Psi} (\rho) \right)^{-1/\alpha} f_{P_{I}}(\rho) \mathrm{d}\rho, 
	 \label{eqn:coverageprobability_RSP_HTT} 
	\end{align}
	where  
	 $\mathbb{B}_{\Psi}(\rho)$ is defined in 
	(\ref{eqn:kernel_B}).

	By definition in (\ref{def:coverage_probability}), the coverage probability under PTP can be written as   
	\begin{align}
	\mathcal{C}_{\mathrm{PTP}} = \mathcal{B}_{\mathrm{PTP}} \mathcal{C}_{\mathrm{B}}+ (1-\mathcal{B}_{\mathrm{PTP}})\mathcal{C}_{\mathrm{H}} 	.	\label{eqn:overall_coverage_probability}
	\end{align}

	Then, by plugging $\mathcal{B}_{\mathrm{PTP}}$ in (\ref{eqn:B_PTP}), $\mathcal{C}_{\mathrm{B}}$ in (\ref{eqn:delta_B_PTP}) and $\mathcal{C}_{\mathrm{H}} $ in (\ref{eqn:coverageprobability_RSP_HTT}) into (\ref{eqn:overall_coverage_probability}), we have  
	(\ref{eq:coverage_probability_Rayleigh_PTP}).
\end{proof}

Moreover, we derive the coverage probability for STP in the following Theorem. 
\begin{theorem}
\label{thm:CoverageOutage_STP} 
The coverage probability of the hybrid D2D communications under STP is given as 
\begin{align}
\label{eq:coverage_probability_STP}
\mathcal{C}_{\mathrm{STP}}  & \!   =  \! \!   \int^{\infty}_{\frac{\rho_{\mathrm{H}}}{\beta \omega} } \!      \exp  \! \left(   - \frac{ \tau_{\mathrm{H}} d^{\mu} \sigma^{2} (1\!-\!\omega) }{\omega \beta \rho \!-\! \rho_{\mathrm{H}} } \! \right)  \mathrm{Det} \!\left( \mathrm{Id} \! + \! \alpha \mathbb{B}_{\Psi} (\rho) \right)^{   -1/\alpha}   \nonumber \\
&\times  f_{P_{I}}(\rho) \mathrm{d}\rho 
 \times 
\left [ \int^{\infty}_{\frac{\rho_{\mathrm{B}}}{\beta \eta}} \exp \left( - \frac{  \tau_{\mathrm{B}}d^{\mu} \sigma^2 }{\delta \rho(1 - \eta)} \right) f_{P_{I}}(\rho) \mathrm{d}\rho \right]^2 \nonumber \\
 &  + \int^{\frac{\rho_{\mathrm{B}}}{\beta \eta}}_{0} \exp \left( - \frac{ \tau_{\mathrm{B
}}d^{\mu} \sigma^2 }{\delta \rho(1 - \eta)} \right) f_{P_{I}}(\rho) \mathrm{d}\rho, 
\end{align} 
where  
$f_{P_{I}}(\rho)$ has been obtained in (\ref{eq:PDF}), 
and $\mathbb{B}_{\Psi} (\rho)$ have been defined in  (\ref{eqn:kernel_B}).

\end{theorem}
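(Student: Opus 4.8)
The plan is to reuse the machinery from the proof of Theorem~\ref{thm:CoverageOutage_PTP}: condition on the incident power $P_I$, whose density $f_{P_I}$ is already available from~(\ref{eq:PDF}) via the $\alpha$-GPP/Fredholm-determinant identity~\cite{L.2015Decreusefond}, and then decompose the coverage event according to the STP mode decision. Under STP the transmitter stays in backscattering mode exactly when the backscatter decoding condition $\nu_{\mathrm{B}}>\tau_{\mathrm{B}}$ holds (which in particular forces $P^{\mathrm{B}}_{E}=\beta\eta P_I>\rho_{\mathrm{B}}$), and switches to HTT otherwise; since a transmitter that remains in backscattering is, by the very definition of the decision rule, already in coverage, one writes $\mathcal{C}_{\mathrm{STP}}$ as the probability of remaining in backscattering mode plus the probability of switching to HTT and then having $\nu_{\mathrm{H}}>\tau_{\mathrm{H}}$ with $P^{\mathrm{H}}_{E}>\rho_{\mathrm{H}}$, each term conditioned on $P_I=\rho$ and averaged against $f_{P_I}(\rho)\,\mathrm{d}\rho$.

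Two conditional ingredients then feed this decomposition. First, given $P_I=\rho$, the exponential law of $h_{\mathrm{S},\mathrm{D}}$ applied to~(\ref{eqn:SNR_B}) gives $\mathbb{P}[\nu_{\mathrm{B}}>\tau_{\mathrm{B}}\,|\,P_I=\rho]=\exp(-\tau_{\mathrm{B}}d^{\mu}\sigma^{2}/(\delta\rho(1-\eta)))$, which holds for $\rho>\rho_{\mathrm{B}}/(\beta\eta)$ and is the recurring backscatter kernel in~(\ref{eq:coverage_probability_STP}). Second, given $P_I=\rho>\rho_{\mathrm{H}}/(\beta\omega)$ (so that $P_{\mathrm{S}}=(\omega\beta\rho-\rho_{\mathrm{H}})/(1-\omega)>0$) and given the interferer configuration $\Psi$, the exponential law of $\widetilde{h}_{\mathrm{S},\mathrm{D}}$ yields $\exp(-\tau_{\mathrm{H}}d^{\mu}(\sigma^{2}+I_{\Psi})/P_{\mathrm{S}})$ with $I_{\Psi}=\sum_{b\in\mathcal{B}}P_{B}h_{b,\mathrm{D}}\|\mathbf{x}_b-\mathbf{x}_{\mathrm{D}}\|^{-\mu}$; averaging over $\Psi$ splits off the noise factor $\exp(-\tau_{\mathrm{H}}d^{\mu}\sigma^{2}(1-\omega)/(\omega\beta\rho-\rho_{\mathrm{H}}))$ and leaves the Laplace transform of $I_{\Psi}$ at $s=\tau_{\mathrm{H}}d^{\mu}(1-\omega)/(\omega\beta\rho-\rho_{\mathrm{H}})$, which equals $\mathrm{Det}(\mathrm{Id}+\alpha\mathbb{B}_{\Psi}(\rho))^{-1/\alpha}$ with $\mathbb{B}_{\Psi}$ as in~(\ref{eqn:kernel_B}) by the same application of~\cite[Theorem~2.3]{L.2015Decreusefond} together with the Nakagami-$m$ MGF already used for $\mathbb{A}_{\Phi}$ in~(\ref{eqn:kernal_A}); because $\Psi$ is independent of $\Phi$, this factor is unaffected by the conditioning on $P_I$, so the HTT branch reproduces $\mathcal{C}_{\mathrm{H}}$ of~(\ref{eqn:coverageprobability_RSP_HTT}).

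Finally one reassembles the pieces, weighting each conditional probability by $f_{P_I}(\rho)$ and integrating over the appropriate energy-feasibility ranges. The stochastic-geometry part is not the obstacle here: the $\alpha$-GPP Laplace-transform evaluation is essentially verbatim from Theorem~\ref{thm:CoverageOutage_PTP}, for both the incident power (through $\Phi$ and $\mathbb{A}_{\Phi}$) and the HTT interference (through $\Psi$ and $\mathbb{B}_{\Psi}$). The delicate step, which I expect to be the main difficulty, is the STP mode-selection accounting: first justifying that, conditionally on $P_I$, the backscatter decision event, the fading $\widetilde{h}_{\mathrm{S},\mathrm{D}}$, and the interferer process $\Psi$ all factor (which rests on $h_{\mathrm{S},\mathrm{D}}$, $\widetilde{h}_{\mathrm{S},\mathrm{D}}$ and $\Psi$ being mutually independent and on $\Psi\perp\Phi$), and then pinning down exactly how the backscatter and HTT success probabilities combine --- in particular why the backscatter success probability enters the HTT-weighted term to the second power, and why the sub-threshold incident-power region $\rho<\rho_{\mathrm{B}}/(\beta\eta)$, where backscattering cannot be attempted, contributes the separate term $\int_{0}^{\rho_{\mathrm{B}}/(\beta\eta)}\exp(-\tau_{\mathrm{B}}d^{\mu}\sigma^{2}/(\delta\rho(1-\eta)))f_{P_I}(\rho)\,\mathrm{d}\rho$ appearing in~(\ref{eq:coverage_probability_STP}).
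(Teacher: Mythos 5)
Your analytic ingredients are all correct and coincide with the paper's: conditioning on $P_I$ with density (\ref{eq:PDF}), the exponential-fading kernel $\exp\left(-\tau_{\mathrm{B}}d^{\mu}\sigma^{2}/(\delta\rho(1-\eta))\right)$ for the backscatter branch, and the Fredholm-determinant evaluation of the interference Laplace transform for the HTT branch are exactly how the paper obtains $\mathcal{C}_{\mathrm{B}}$ in (\ref{eqn:delta_B_PTP}) and $\mathcal{C}_{\mathrm{H}}$ in (\ref{eqn:coverageprobability_RSP_HTT}). The gap is precisely the step you defer to the end and label the main difficulty: how the two branches combine under STP. Without it you have not proved (\ref{eq:coverage_probability_STP}); indeed the decomposition you do commit to, $\mathcal{C}_{\mathrm{STP}}=\mathbb{P}[\nu_{\mathrm{B}}>\tau_{\mathrm{B}}]+\mathbb{P}[\nu_{\mathrm{B}}\le\tau_{\mathrm{B}}]\cdot\mathbb{P}[\text{HTT succeeds}]$, evaluates to $\mathcal{C}_{\mathrm{B}}+(1-\mathcal{C}_{\mathrm{B}})\mathcal{C}_{\mathrm{H}}$, which produces neither the squared backscatter integral nor the truncated integral $\int_{0}^{\rho_{\mathrm{B}}/(\beta\eta)}$ appearing in the stated result.

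The paper's own argument is a one-liner of a different shape: it keeps the PTP mixture (\ref{eqn:overall_coverage_probability}), $\mathcal{C}=\mathcal{B}\,\mathcal{C}_{\mathrm{B}}+(1-\mathcal{B})\,\mathcal{C}_{\mathrm{H}}$, replaces the mode-selection probability $\mathcal{B}_{\mathrm{PTP}}$ by $\mathcal{B}_{\mathrm{STP}}$, and then observes that $\mathcal{B}_{\mathrm{STP}}$ equals $\mathcal{C}_{\mathrm{B}}$ of (\ref{eqn:delta_B_PTP}) because STP keeps backscattering exactly when the backscatter SNR test passes; the product $\mathcal{B}_{\mathrm{STP}}\cdot\mathcal{C}_{\mathrm{B}}=\mathcal{C}_{\mathrm{B}}^{2}$ is where the squared integral originates. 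That identification, fed back into the PTP mixture, is the single idea your proposal is missing. I should add that your hesitation is not unfounded: the paper's substitution treats the mode-selection event and the backscatter-coverage event as independent even though under STP they are the same event (so the backscatter branch should arguably contribute $\mathcal{C}_{\mathrm{B}}$, not $\mathcal{C}_{\mathrm{B}}^{2}$), and even the substitution $\mathcal{C}_{\mathrm{B}}^{2}+(1-\mathcal{C}_{\mathrm{B}})\mathcal{C}_{\mathrm{H}}$ does not literally match the displayed (\ref{eq:coverage_probability_STP}), where the square multiplies $\mathcal{C}_{\mathrm{H}}$ and the remainder is an integral over $[0,\rho_{\mathrm{B}}/(\beta\eta)]$. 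Still, as a proof of the theorem as stated, your proposal is incomplete: it stops exactly at the combination step and, if completed along the lines you sketch, would land on a different expression.
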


\begin{proof} 
Let $\mathcal{B}_{\mathrm{STP}}$ denote the probability that $\mathrm{S}$ is in ambient backscattering mode under STP.
According to the criteria of STP, $\mathcal{C}_{\mathrm{STP}}$ can be expressed by $\mathcal{C}_{\mathrm{PTP}}$ in (\ref{eqn:overall_coverage_probability}) with $\mathcal{B}_{\mathrm{PTP}}$ replaced by $\mathcal{B}_{\mathrm{STP}}$.  
One simply notes that the definition of $\mathcal{B}_{\mathrm{STP}} $ is equivalent to  the expression of $\mathcal{C}_{\mathrm{B}}$ in (\ref{eqn:delta_B_PTP}). 
Hence, we have
\begin{align}
\mathcal{B}_{\mathrm{STP}}  
= \int^{\infty}_{\rho_{\mathrm{B}}/\beta\eta} \exp \left( - \frac{ \tau_{\mathrm{B}} d^{\mu} \sigma^2 }{\delta \rho \left ( 1- \eta \right ) } \right) f_{P_{I}}(\rho) \mathrm{d}\rho 	.	\label{eqn:delta_B_STP} 
\end{align}

Therefore, (\ref{eq:coverage_probability_STP}) can be obtained from (\ref{eq:coverage_probability_Rayleigh_PTP}) through the aforementioned replacement. 
\end{proof}

\section{Performance Evaluation and Analysis}

In this section, we validate our derived analytical expressions and conduct performance analysis based on numerical simulations.  
The transmit power levels of the transmitters in $\Phi$ and $\Psi$ are set to be $P_{A}=P_{B}=0.2$ W, which are within the typical range of uplink transmit power for mobile devices.  
When the hybrid transmitter is in the HTT mode, we assume equal time duration for energy harvesting and information transmission. In the ambient backscattering mode, 
the hybrid transmitter functions with power consumption $\rho_{\mathrm{B}} =8.9$ $\mu$W and achieves a data rate of $T_{\mathrm{B}}=$1 kbps if the transmission is successful.
The other system parameters adopted in this section are listed in Table~\ref{parameter_setting} unless otherwise stated.

\begin{table*} 
 \centering
 \caption{\footnotesize Parameter Setting.} \label{parameter_setting} 
 \begin{tabular}{|l|l|l|l|l|l|l|l|l|l|l|l|} 
 \hline
 Symbol & $\mu$ & $d$ & $R$ &   $\eta$ & $\beta$ & $\delta$ & $\tau_{\mathrm{H}}$ & $\tau_{\mathrm{B}}$ & $\rho_{\mathrm{H}}$ & $\sigma^2$ \\ 
 \hline
 Value & 4 & 5 m & 100 m &   0.625 & 30 $\%$ & 1 & -40 dB & 5 dB & 113 $\mu$W & -90 dBm \\
 \hline 
 \end{tabular}
 \end{table*}

 \begin{figure} 
 \centering
  \begin{minipage}[c]{0.4\textwidth}
  \includegraphics[width=0.999\textwidth]{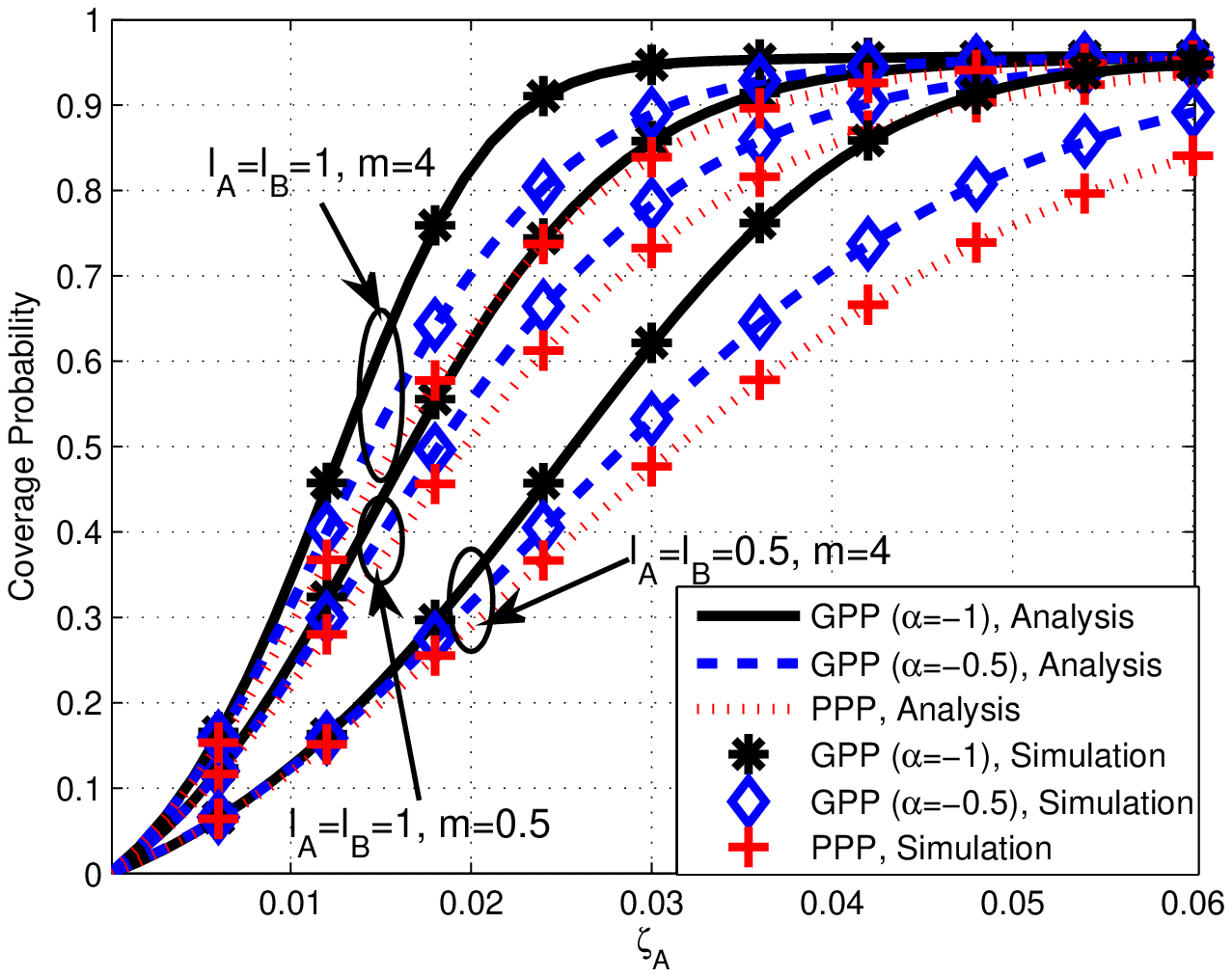} \vspace{-5mm}
  \caption{$\mathcal{C}_{\mathrm{PTP}}$ as a function of $\zeta_{A}$. } \label{fig:CoverProb_density_PTP} 
  \end{minipage}  
  \begin{minipage}[c]{0.4\textwidth}
  \includegraphics[width=0.999\textwidth]{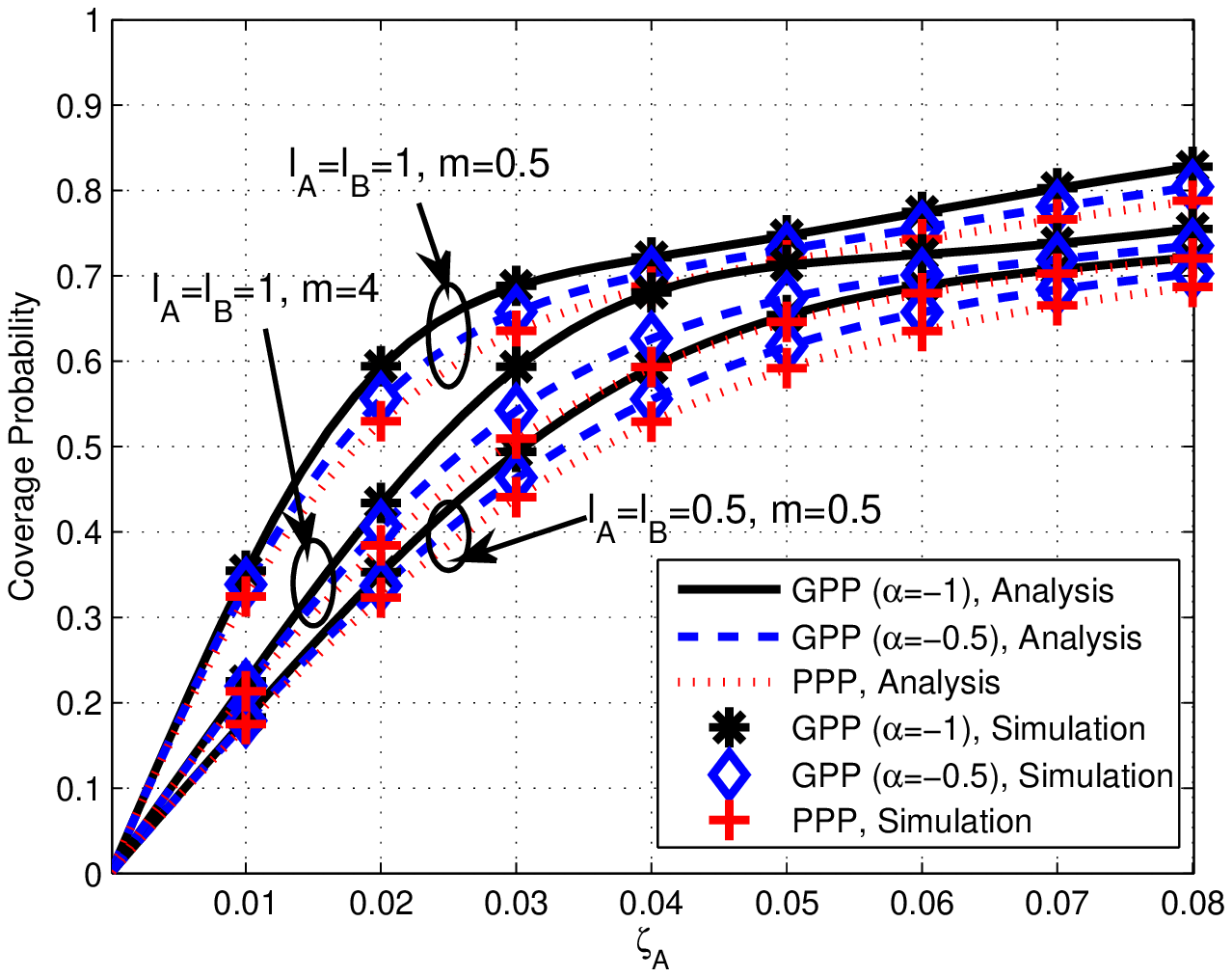} \vspace{-5mm}
  \caption{$\mathcal{C}_{\mathrm{STP}}$ as a function of $\zeta_{A}$. }\label{fig:CoverProb_density_STP} 
  \end{minipage}
  \end{figure}  
For comparison purpose, we evaluate the performance of a pure wireless-powered transmitter operated by the HTT protocol and a pure ambient backscatter transmitter as references, the plots of which are labeled as  ``Pure HTT" and ``Pure Ambient Backscattering", respectively. The performance of a pure wireless-powered transmitter (called pure HTT transmitter) and a pure ambient backscatter transmitter can be obtained by setting the hybrid transmitter in HTT mode and ambient backscattering mode, respectively, in all conditions. Specifically, the  coverage probabilities 
of the pure ambient backscatter transmitter and the pure HTT transmitter can be evaluated by  $\mathcal{C}_{\mathrm{B}}$ in (\ref{eqn:delta_B_PTP})  and
$\mathcal{C}_{\mathrm{H}}$ in (\ref{eqn:coverageprobability_RSP_HTT}), respectively.

\begin{figure} 
 \centering
  \subfigure []
   {
 \label{fig:CP_density_comparison1}
  \centering 
  \includegraphics[width=0.4 \textwidth]{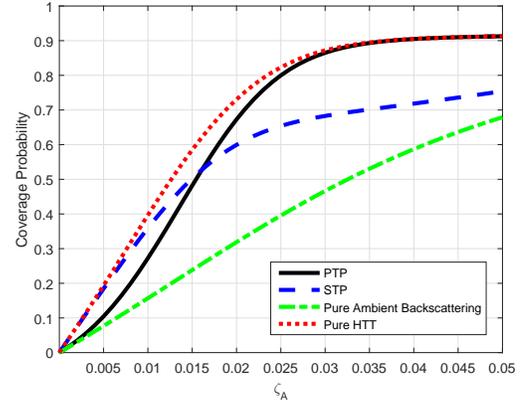}}  
  \centering    
  \subfigure [ 
  ] {
 \label{fig:CP_density_comparison2}
  \centering
 \includegraphics[width=0.4 \textwidth]{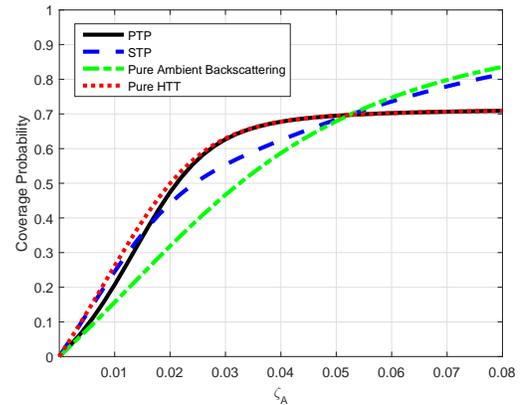}}
 \caption{Comparison of coverage probabilities as a function of $\zeta_{A}$. ((a) $\xi=0.2$, (b) $\xi=0.8$) } 
 \centering
  \vspace{-5mm}
 \label{CP_density_comparison}
 \end{figure}

Fig.~\ref{fig:CoverProb_density_PTP} and Fig.~\ref{fig:CoverProb_density_STP} illustrate how the coverage probabilities $\mathcal{C}_{\mathrm{PTP}}$ and $\mathcal{C}_{\mathrm{STP}}$ obtained in (\ref{eq:coverage_probability_Rayleigh_PTP}) and  (\ref{eq:coverage_probability_STP}), respectively, vary with ambient transmitter density $\zeta_{A}$.
The accuracy of the coverage probability expressions is validated by the simulation results with different values of $\alpha$ under different transmission loads and Nakagami fading coefficients. 
In principle, the improvement of the coverage probability can be achieved with increased transmit power at the hybrid transmitter (either in ambient backscattering mode or HTT mode), thus requiring more incident power. Correspondingly, the incident power becomes higher with larger density $\zeta_{A}$, repulsion factor $\alpha$, transmission load~$l_{A}$, and Nakagami shape parameter $m$. 
The mentioned tendencies of the coverage probability have been verified for both PTP and STP in Fig.~\ref{fig:CoverProb_density_PTP} and Fig.~\ref{fig:CoverProb_density_STP}, respectively, which indicates that both $\mathcal{C}_{\mathrm{PTP}}$ and $\mathcal{C}_{\mathrm{STP}}$ are monotonically increasing functions of $\zeta_{A}$, $\alpha$, $l_{A}$ and $m$. Note that from Fig.~\ref{fig:CoverProb_density_PTP}, with the increase of $\zeta_{A}$, the coverage probability approaches a value smaller than 1. This is because, given an interference ratio $\xi$, the increase of $\zeta_{A}$ not only provides the hybrid transmitter with more harvested energy to transmit, but also leads to more interference that harms the transmission. 

Fig.~\ref{CP_density_comparison}  compares coverage probability (as a function of density $\zeta_A$) of PTP, STP, pure ambient backscattering, and pure HTT. 
When $\xi$ is small (i.e., $\xi=0.2$) as shown in Fig.~\ref{fig:CP_density_comparison1}, the pure HTT transmitter achieves significantly  higher coverage probabilities. However, in the case with high interference ratio (i.e., $\xi=0.8$) as depicted in Fig.~\ref{fig:CP_density_comparison2}, their performance gap becomes smaller and pure ambient backscattering outperforms pure HTT when $\zeta_{A}$ is large (e.g., above 0.06). This is because with this condition, $\mathcal{C}_{\mathrm{H}}$ is adversely affected by the high interference. We also observe that PTP achieves similar performance to that of STP under small $\zeta_{A}$ and is obviously outperformed by STP as $\zeta_{A}$ grows larger (e.g., above 0.06). The reason behind this is that PTP selects operation mode solely based on the incident power and is unaware of the interference level. Consequently, it remains in the HTT mode even when the achieved SINR is low. This reflects that STP is more suitable for the use in an interference rich environment.

\section{Conclusion} 
 
In this paper, we have introduced a novel paradigm of hybrid D2D communications that integrate ambient backscattering and wireless-powered communications. To enable the operation of our proposed hybrid transmitter in diverse environments, two mode selection protocols, namely PTP and STP, have been devised based on the energy harvesting rate and received SNR of the modulated backscatter, respectively. Through repulsive point process modeling, we have characterized the coverage probability of the hybrid D2D communications. The performance analysis has  shown that the self-sustainable D2D communications benefit from larger repulsion, transmission load and density of the ambient transmitters. Moreover, we have found that PTP is more suitable for the use in the scenarios with a large density of ambient transmitters and low interference level. On the contrary, STP becomes favorable in the scenarios when the interference level and density of ambient transmitters are both low or both high. 
\section{Acknowledgement} 
\vspace{-1mm}
This work was supported by the Natural Sciences and Engineering Research Council of Canada, National Research Foundation of Korea Grant Funded by the Korean Government (MSIP) under Grant 2014R1A5A1011478, and Singapore MOE Tier 1 (RG33/16 and RG18/13) and MOE Tier 2 (MOE2014-T2-2-015 ARC4/15 and MOE2013-T2-2-070 ARC16/14).   


\end{document}